\providecommand{\U}[1]{\protect\rule{.1in}{.1in}}
\newtheorem{theorem}{Theorem}
\newtheorem{acknowledgement}[theorem]{Acknowledgement}
\newtheorem{definition}[theorem]{Definition}
\newenvironment{proof}[1][Proof]{\noindent\textbf{#1.} }{\ \rule{0.5em}{0.5em}}
\begin{document}

\title{Quantum Indeterminacy, Polar Duality, and Symplectic Capacities}
\author{Maurice A. de Gosson\thanks{maurice.de.gosson@univie.ac.at}\\University of Vienna\\Faculty of Mathematics (NuHAG)}
\maketitle

\begin{abstract}
The notion of polarity between sets, well-known from convex geometry, is a
geometric version of the Fourier transform. We exploit this analogy to propose
a new simple definition of quantum indeterminacy, using what we call
\textquotedblleft$\hbar$-polar quantum pairs\textquotedblright, which can be
viewed as pairs of position-momentum indeterminacy with minimum spread. The
existence of such pairs is guaranteed by the usual uncertainty principle, but
is at the same time more general. We use recent advances in symplectic
topology to show that this quantum indeterminacy can be measured using a
particular symplectic capacity related to action and which reduces to area in
the case of one degree of freedom. We show in addition that polar quantum
pairs are closely related to Hardy's uncertainty principle about the
localization of a function and its Fourier transform.

\end{abstract}

\section{Introduction}

There are several reasons to question the universality of the
Robertson--Schr\"{o}dinger (RS)\ inequalities
\begin{equation}
(\Delta x_{j})^{2}(\Delta p_{j})^{2}\geq\Delta(x_{j},p_{j})^{2}+\tfrac{1}%
{4}\hbar^{2} \label{1}%
\end{equation}
in their textbook interpretation, where the quantities $\Delta p_{j}$ and
$\Delta x_{j}$ are viewed as a measurement of the \textquotedblleft
spread\textquotedblright\ of the wavefunction corresponding to the state under
consideration, and $\Delta(x_{j},p_{j})$ their covariance. First, the RS
inequalities (\ref{1}) are not a statement about errors of measurement; they
describe the limitation on preparing microscopic objects but have no direct
relevance to the limitation of accuracy of measuring devices, because of the
occurrence of noise. For instance, Ozawa \cite{oza1} aims to describe the
interplay between error and disturbance for individual states, while Busch et
al. \cite{busch} give a state-independent characterization of measuring
devices. Secondly, the RS inequalities are a \emph{rigorous} mathematical
consequence of the definitions of $\Delta p_{j}$, $\Delta x_{j}$, and
$\Delta(x_{j},p_{j})$ as (co)variances. However, as Hilgevoord and Uffink very
pertinently note in \cite{hi02,hiuf85bis,hiuf85}, standard deviations only
give an adequate measurement of the spread of a wavefunction when the
probability density (here the square of the modulus of the wavefunction) is
Gaussian, or nearly Gaussian. Their remarks open up the way to new
formulations of the uncertainty principle: while Gaussian measurements are
indeed ubiquitous because of the central limit theorem of Bayesian statistical
inference, there are situations where measurements do not lead to Gaussian
distributions, as illustrated in the aforementioned papers of Hilgevoord and
Uffink by several examples.

The discussion above suggest that there should be alternative ways to measure
quantum uncertainty --or, as we prefer to call it-- \textit{quantum
indeterminacy}. In this Letter we propose one such alternative, which has the
advantage of being conceptually very simple and easy to implement practically.
It is based on the notion of polar dual of a centered convex body, well-known
from convex geometry.

We recall (\cite{dutta,SMD,SSM} and \cite{go09,golu09}) that the symmetric
matrix%
\begin{equation}
\Sigma=%
\begin{pmatrix}
\Delta(x,x) & \Delta(x,p)\\
\Delta(p,x) & \Delta(p,p)
\end{pmatrix}
\label{covell}%
\end{equation}
where $\Delta(x,x)=(\Delta(x_{j},x_{k}))_{1\leq j,k\leq n}$, etc. is a
\textit{quantum covariance matrix} if and only if the self-adjoint complex
matrix
\begin{equation}
\Sigma+\frac{i\hbar}{2}J\text{ \ \textit{is positive semidefinite}};\text{
}\label{sigpos}%
\end{equation}
here $J=%
\begin{pmatrix}
0_{n\times n} & I_{n\times n}\\
-I_{n\times n} & 0_{n\times n}%
\end{pmatrix}
$ is the standard symplectic matrix. It follows that $\Sigma$ is
positive-definite, and that $\Delta(x,x)$ and $\Delta(p,p)$ are invertible.
The condition (\ref{sigpos}) is equivalent to the RS inequalities (\ref{1}).
The quantum covariance ellipsoid associated with $\Sigma$ is%
\begin{equation}
\Omega_{\Sigma}:\tfrac{1}{2}z^{T}\Sigma^{-1}z\leq1.\label{quellipse}%
\end{equation}
We have shown in previous work \cite{go03,go09,golu09}, that the RS
inequalities (\ref{1}) can be rewritten in canonically invariant form as
\begin{equation}
c(\Omega_{\Sigma})\geq\tfrac{1}{2}h\label{3}%
\end{equation}
where $c(\Omega_{\Sigma})$ is the \textit{symplectic capacity} of the
covariance ellipsoid. The number $c(\Omega_{\Sigma})$, which is a measure of
uncertainty, has the dimension of an \textit{area}; it is defined by a
symplectic \textquotedblleft non-squeezing\textquotedblright\ property
\cite{gr85,hoze94,Polter}: we have $c(\Omega_{\Sigma})=\pi R^{2}$ where $R$ is
the radius of the largest phase space ball that can be sent\ inside
$\Omega_{\Sigma}$ using a symplectomorphism (linear, or not). The formulation
(\ref{3}) of the RS inequalities is invariant under arbitrary
symplectomorphisms, whereas the RS inequalities themselves are only invariant
under linear or affine symplectomorphisms. The following statement \cite{egg}
makes inequality (\ref{3}) more intuitive: the intersection of $\Omega
_{\Sigma}$ by any symplectic plane is an ellipse with area at least $\tfrac
{1}{2}h$ (a symplectic plane is obtained by applying a linear symplectic
transformation to any of the planes of conjugate variables $x_{j},p_{j}$).

\section{Quantum Indeterminacy: $n=1$}

Making a great number of simultaneous position and momentum observations on a
one-dimensional quantum system, we find they are contained in intervals
$[x_{0}-a,x_{0}+a]$, $a>0$, and $[p_{0}-b,p_{0}+b]$, $b>0$, respectively. Set
now $X=[-a,a]$ and define its \textquotedblleft$\hbar$-polar
dual\textquotedblright\ $X^{\hbar}$ of $X$ as being the interval of all real
numbers $p$ such that $px\leq\hbar$; we have $X^{\hbar}=[-\hbar/a,\hbar/a]$.
We will say that $(X,P)$ is a $\hbar$-\textit{polar} \textit{quantum pair} if
the inclusion $X^{\hbar}\subset P$ holds; this relation is reflexive because
it is equivalent to $P^{\hbar}\subset X$: if $(X,P)$ is\ a quantum pair, so is
$(P,X)$. The notion $\hbar$-polarity is a generalization of the notion of
spreading used in the Robertson--Schr\"{o}dinger inequalities. Assume in fact
that
\[
(\Delta x)^{2}(\Delta p)^{2}\geq\Delta(x,p)^{2}+\tfrac{1}{4}\hbar^{2}%
\]
and consider the associated covariance ellipse%
\begin{equation}
\Omega_{\Sigma}:\dfrac{(\Delta p)^{2}}{2D}x^{2}+\frac{\Delta(x,p)}{D}%
px+\dfrac{(\Delta x)^{2}}{2D}p^{2}\leq1 \label{covn1}%
\end{equation}
where we have set $D=(\Delta x)^{2}(\Delta p)^{2}-\Delta(x,p)^{2}$. The
projections of $\Omega_{\Sigma}$ on the $x$ and $p$ axes are the intervals
$X=[-\sqrt{2}\Delta x,\sqrt{2}\Delta x]$ and $P=[-\sqrt{2}\Delta p,\sqrt
{2}\Delta p]$. The $\hbar$-polar dual\ $X^{\hbar}$ is the interval
$[-\hbar/\sqrt{2}\Delta x,\hbar/\sqrt{2}\Delta x]$ and it is contained in $P$
if and only if $\hbar/\sqrt{2}\Delta x\leq\sqrt{2}\Delta p$, which is
equivalent to Heisenberg's inequality $\Delta x\Delta p\geq\frac{1}{2}\hbar$.

There is also an interesting analytic motivation for the introduction of
$\hbar$-polar dual pairs. Consider a square integrable non-zero function
$\psi$ and its Fourier transform $\widehat{\psi}$. It is a \textquotedblleft
folk theorem\textquotedblright\ that $\psi$ and $\widehat{\psi}$ cannot be
simultaneously arbitrarily sharply localized. This trade-off between a
function and its Fourier transform, which is related to Heisenberg's
uncertainty principle, was rigorously stated by\ Hardy \cite{ha32} in 1932: if
there exists a constant $C>0$ such that
\begin{equation}
|\psi(x)|\leq Ce^{-x^{2}/4\sigma_{X}^{2}}\text{ ,\ }|\widehat{\psi}(p)|\leq
Ce^{-p^{2}/4\sigma_{P}^{2}} \label{hardy1}%
\end{equation}
then we must have $\sigma_{X}\sigma_{P}\geq\frac{1}{2}\hbar$\textit{ }and:
\textbf{(i)} if $\sigma_{X}\sigma_{P}=\frac{1}{2}\hbar$ the function $\psi$
must be a Gaussian $\psi(x)=ke^{-x^{2}/4\sigma_{X}^{2}}$ for some complex
constant $k$; \textbf{(ii)} if $\sigma_{X}\sigma_{P}>\frac{1}{2}\hbar$ then
$\psi$ is a finite linear combination of Hermite functions. Hardy's condition
is equivalent to saying that the intervals $X=[-\sqrt{2}\sigma_{X},\sqrt
{2}\sigma_{X}]$ and $P=[-\sqrt{2}\sigma_{P},\sqrt{2}\sigma_{P}]$ form a dual pair.

We are going to see that everything can be generalized to the case of an
arbitrary number $n$ of degrees of freedom.

\section{A Geometric Fourier Transform}

Let $X$ be a convex body in $\mathbb{R}_{x}^{n}$. If $X$ is centrally
symmetric (i.e. $X=-X$), the $\hbar$-polar set of $X$ is by definition
\begin{equation}
X^{\hbar}=\{p\in\mathbb{R}_{p}^{n}:p^{T}x\leq\hbar\text{ \textit{for all}
}x\in X\}; \label{omo}%
\end{equation}
when $\hbar=1$ it is the usual polar set $X^{o}$ familiar from convex
geometry. Notice that the $\hbar$-polar transformation reverses inclusions: if
$X\subset Y$ then $Y^{\hbar}\subset X^{\hbar}$ and if $X$ is convex then
$(X^{\hbar})^{\hbar}=X$. We have
\begin{equation}
B^{n}(R)^{\hbar}=B^{n}(\hbar/R). \label{ballpolar}%
\end{equation}
($B^{n}(R)$ the ball $|x|\leq R$). In fact, given $p$ in $B^{n}(R)^{\hbar}$
choose $x$ and $p$ colinear and $|x|=R$. Then $p^{T}x=|p|R\leq\hbar$ hence
$|p|\leq\hbar/R$. If conversely $|p|\leq\hbar/R$ then $p^{T}x\leq
|x||p|\leq\hbar$ for all $x$ such that $|x|\leq R$, hence our claim. It is
also easily verified that for every invertible $n\times n$ matrix $L$ we have
\begin{equation}
(LX)^{\hbar}=(L^{T})^{-1}X^{\hbar}; \label{phi2}%
\end{equation}
in particular if $X$ is scaled up then $X^{\hbar}$ is scaled down: $(\lambda
X)^{\hbar}=\lambda^{-1}X^{\hbar}$ for every $\lambda>0$. The $\hbar$-polar set
of an ellipsoid is again an ellipsoid: let $B_{A}^{n}(R):x^{T}Ax\leq R^{2}$
where $A$ is a positive definite and symmetric matrix; then
\begin{equation}
B_{A}^{n}(R)^{\hbar}=B_{A^{-1}}^{n}(\hslash/R). \label{da}%
\end{equation}
(it suffices to notice that $B_{A}^{n}(R)$ is the image of $B^{n}(R)$ by the
linear automorphism $x\longmapsto A^{-1/2}x$ and to use formula (\ref{phi2})
and the equality (\ref{ballpolar})).

Let us introduce the following definition and terminology:

\begin{definition}
Let $X$ and $P$ be two symmetric convex bodies in $\mathbb{R}^{n}$. We will
say that $(X,P)$ is a $\hbar$-polar quantum pair if $X^{\hbar}\subset P$.
\end{definition}

As in the case $n=1$ this relation is reflexive: $(X,P)$ is a $\hbar$-polar
quantum pair if and only if $(P,X)$ is. 

Here is one simple example when $n=2$ (it can easily be generalized to
arbitrary dimension $n$). Assume that numerous position measurements are all
located, after the elimination of outliners, in a disk $D(x_{0},R_{x}%
):|x-x_{0}|\leq R_{x}$ and that, similarly, momentum measurements lead to a
disk $D(p_{0},R_{p}):|p-p_{0}|\leq R_{p}$. The $\hbar$-polar dual $X^{\hbar}$
of $X=D(0,R_{p})$ is the disk $D(0,R_{x})^{\hbar}:$ $p_{1}^{2}+p_{2}^{2}%
\leq\hbar/R_{x}^{2}$ and the condition $D(0,R_{x})^{\hbar}\subset
D(p_{0},R_{p})$ is equivalent to $R_{x}R_{p}\geq\hbar$. If we assume that the
probability distributions on the clouds $D(x_{0},R_{x})$ and $D(p_{0},R_{p})$
are uniform, an easy calculation yields the variances $\sigma_{x_{1}}%
^{2}=\sigma_{x_{2}}^{2}=\pi R_{x}^{2}/4$ and $\sigma_{p_{1}}^{2}=\sigma
_{p_{2}}^{2}=\pi R_{p}^{2}/4$ and we thus have $\sigma_{x_{1}}\sigma_{p_{1}%
}\geq\pi\hbar/4$ and $\sigma_{x_{2}}\sigma_{p_{2}}\geq\pi\hbar/4$. In the
\textquotedblleft minimum indeterminacy case\textquotedblright\ $X^{\hbar}=P$
we have $\sigma_{x_{1}}\sigma_{p_{1}}=\pi\hbar/4$ and $\sigma_{x_{2}}%
\sigma_{p_{2}}=\pi\hbar/4$ and this value exceeds the theoretical value
$\frac{1}{2}\hbar$ predicted by Heisenberg's relations by approximately 50\%.

We will show below that the projections $X$ and $P$ on position and momentum
spaces of the quantum covariance ellipsoid form a quantum pair. Put
differently, the RS inequalities (\ref{1}) imply \textquotedblleft$\hbar
$-polar indeterminacy\textquotedblright.

\section{Symplectic Capacities}

Let us return briefly to the case $n=1$:\ the $\hbar$-polar
dual\textquotedblright\ of the interval $X=[-a,a]$ is $X^{\hbar}%
=[-\hbar/a,\hbar/a]$ hence, if $(X,P)$ is a $\hbar$-polar quantum pair of
intervals we have%
\begin{equation}
\operatorname*{Area}(X\times P)\geq\operatorname*{Area}(X\times X^{\hbar
})=4\hbar.\label{area1}%
\end{equation}
The generalization of this inequality to the case of arbitrary $n$ is
\emph{not} straightforward: a first educated guess seems to suggest that the
word \textquotedblleft area\textquotedblright\ could simply be replaced by the
word \textquotedblleft volume\textquotedblright\ in higher dimensions. But it
is not so; we will need the very subtle notion of symplectic capacity to
extend (\ref{area1}); our proof will rely on a recent mathematical result due
to Artstein-Avidan, Karasev, and Ostrover \cite{arkaos13}. Let us first recall
the general notion of symplectic capacity \cite{hoze94,Polter}, reviewed in
\cite{golu09}: it is a mapping $c$ associating to every subset $\Omega$ of
phase space a number $c(\Omega)$ having the following properties:

\begin{itemize}
\item \textit{Monotonicity}: If $\Omega\subset\Omega^{\prime}$ then
$c(\Omega)\leq c(\Omega^{\prime})$;

\item \textit{Conformality}: For every real scalar $\lambda$ we have
$c(\lambda\Omega)=\lambda^{2}c(\Omega)$;

\item \textit{Symplectic invariance}: We have $c(f(\Omega))=c(\Omega)$ for
every symplectomorphism $f$;

\item \textit{Normalization}: We have
\begin{equation}
c(B^{2n}(R))=\pi R^{2}=c(Z_{j}^{2n}(R)) \label{cbz}%
\end{equation}
where $B^{2n}(R)$ is the ball $|z|\leq R$ and $Z_{j}^{2n}(R)$ the cylinder
$x_{j}^{2}+p_{j}^{2}\leq R^{2}$.
\end{itemize}

\noindent We are assuming that the phase space $\mathbb{R}^{2n}$ is equipped
with the standard symplectic form $\sigma(z,z^{\prime})=(z^{\prime})^{T}Jz$.
Property (\ref{cbz}) is often dubbed the \textquotedblleft principle of the
symplectic camel\textquotedblright\ \cite{golu09,egg}; it is equivalent to
Gromov's non-squeezing theorem \cite{gr85}. There are infinitely many
symplectic capacities, but all agree on ellipsoids. In the case of one degree
of freedom all symplectic capacities on the phase plane are identical to area
on connected and simply connected surfaces. The smallest (resp. the largest)
symplectic capacity $c_{\min}$ (resp. $c_{\max}$) are defined by%
\begin{subequations}
\begin{align*}
c_{\min}(\Omega)  & =\sup_{f}\{\pi R^{2}:f(B^{2n}(R))\subset\Omega\}\\
c_{\max}(\Omega)  & =\inf_{f}\{\pi R^{2}:f(\Omega)\subset Z_{j}^{2n}(R)\}
\end{align*}
where $f$ ranges over the group $\operatorname*{Symp}(n)$ of all
symplectomorphisms of $(\mathbb{R}^{2n},\sigma)$.

Another interesting symplectic capacity is the Hofer--Zehnder capacity
$c_{\mathrm{HZ}}$ \cite{hoze94,Polter}. It has the following property: if
$\Omega$ is a compact and convex set then
\end{subequations}
\begin{equation}
c_{\mathrm{HZ}}(\Omega)=%
{\displaystyle\oint\nolimits_{\gamma_{\min}}}
pdx\label{chz}%
\end{equation}
where $\gamma_{\min}$ is the shortest periodic Hamiltonian orbit on the
boundary $\partial\Omega$, viewed as the hypersurface of constant energy of
some Hamiltonian function $H$ (which has not to be of any particular type,
e.g. \textquotedblleft kinetic energy plus potential\textquotedblright). The
orientation of $\gamma_{\min}$ is chosen so that $c_{\mathrm{HZ}}(\Omega
)\geq0$. For formula (\ref{chz}) to be unambiguous we have to show that the
action integral is independent of the Hamiltonian function $H$. The argument
goes as follows (for a detailed proof see \cite{golu09}): assume that there
exist two Hamiltonian functions $H$ and $K$ for which $\partial\Omega$ is an
energy hypersurface. The vector fields $\nabla_{z}H$ and $\nabla_{z}K$ are
both normal to $\partial\Omega$, hence the Hamiltonian fields $X_{H}%
=J\nabla_{z}H$ and $X_{K}=J\nabla_{z}H$ are proportional and thus have the
same trajectories (up to a reparametrization); in particular they have the
same periodic orbits. We have of course%
\begin{equation}
c_{\min}(\Omega)\leq c_{\mathrm{HZ}}(\Omega)\leq c_{\max}(\Omega
)\label{cminmax}%
\end{equation}
for every subset $\Omega$ of $\mathbb{R}^{2n}$. 

We emphasize that symplectic capacities have nothing to with the notion of
volume. They have the dimension of an \emph{area }in view of the conformality axiom.

\section{Measuring Quantum Indeterminacy}

Let us state the main result:

\begin{theorem}
\label{thm1}Let $(X,P)$ be a $\hbar$-polar quantum pair. We have%
\begin{equation}
c_{\max}(X\times P)=c_{\mathrm{HZ}}(X\times P)\geq4h.\label{yaron1}%
\end{equation}
with equality if $X^{\hbar}=P$.
\end{theorem}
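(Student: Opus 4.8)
The plan is to reduce the equality case $X^{\hbar}=P$ to the Artstein-Avidan--Karasev--Ostrover theorem \cite{arkaos13}, and then to deduce the general inequality by monotonicity. First I would record that $\hbar$-polarity is ordinary polarity up to a dilation: the definitions give $X^{\hbar}=\hbar X^{o}$, and (\ref{phi2}) together with $(\lambda X)^{o}=\lambda^{-1}X^{o}$ shows that, writing $X=\sqrt{\hbar}\,K$ with $K$ centrally symmetric, one has $X\times X^{\hbar}=\sqrt{\hbar}\,(K\times K^{o})$. The conformality axiom then gives $c(X\times X^{\hbar})=\hbar\,c(K\times K^{o})$ for every symplectic capacity $c$. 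The theorem of \cite{arkaos13} computes the Ekeland--Hofer--Zehnder capacity of the product of a centrally symmetric convex body with its polar, $c_{\mathrm{EHZ}}(K\times K^{o})=4$; since $K\times K^{o}$ is convex and $c_{\mathrm{HZ}}=c_{\mathrm{EHZ}}$ on convex bodies, this yields $c_{\mathrm{HZ}}(X\times X^{\hbar})=4\hbar$, which for $n=1$ is the elementary area computation (\ref{area1}) and establishes the equality case of (\ref{yaron1}).

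Next I would handle an arbitrary polar pair using only $X^{\hbar}\subset P$. This inclusion gives $X\times X^{\hbar}\subset X\times P$, so the monotonicity axiom forces $c(X\times P)\geq c(X\times X^{\hbar})$ for every capacity $c$; applied to $c_{\mathrm{HZ}}$ and to $c_{\max}$ this is exactly the inequality in (\ref{yaron1}), and equality can hold only when the inclusion is an equality, i.e.\ when $X^{\hbar}=P$.

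It remains to prove the identity $c_{\max}(X\times P)=c_{\mathrm{HZ}}(X\times P)$ for every polar pair. Here the general sandwich (\ref{cminmax}) only gives $c_{\mathrm{HZ}}(X\times P)\leq c_{\max}(X\times P)$, so I would need the reverse inequality. The idea I would pursue is to exploit that $X\times P$ is a \emph{Lagrangian product}: for such products the Hofer--Zehnder (equivalently Ekeland--Hofer--Zehnder) capacity is realized by the shortest closed $(X,P)$-Minkowski billiard trajectory, and the minimal closed characteristic carrying this action should single out a symplectic direction along which $X\times P$ can be squeezed into a cylinder $Z_{j}^{2n}(R)$ with $\pi R^{2}=c_{\mathrm{HZ}}(X\times P)$. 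Producing such an embedding would give $c_{\max}(X\times P)\leq\pi R^{2}=c_{\mathrm{HZ}}(X\times P)$, and with (\ref{cminmax}) this pins the two capacities together.

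The hard part will be exactly this last step. The coincidence of the outer (cylindrical) capacity with the inner Hofer--Zehnder capacity is not a formal consequence of the axioms --- for general convex bodies it is precisely the kind of statement controlled by Viterbo's conjecture --- and it cannot be read off from the value $c_{\mathrm{EHZ}}(K\times K^{o})=4$, which only governs the minimum; it must be extracted from the product structure and made uniform in $P$ as $P$ grows beyond $X^{\hbar}$. In the one-dimensional case the difficulty disappears, since every capacity on the plane coincides with area and the whole statement collapses to (\ref{area1}); so the real work of the theorem is to run the billiard/embedding argument in all dimensions at once.
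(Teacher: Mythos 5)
Your first two steps coincide with the paper's own argument for the inequality: the paper likewise obtains $c_{\mathrm{HZ}}(X\times P)\geq 4\hbar$ and $c_{\max}(X\times P)\geq 4\hbar$ by monotonicity from the value on the polar product $X\times X^{\hbar}$, and your explicit rescaling $X\times X^{\hbar}=\sqrt{\hbar}\,(K\times K^{o})$ is a clean way to fix the constant (it also confirms that the ``$4h$'' in (\ref{yaron1})--(\ref{yaron2}) should read $4\hbar$, consistently with (\ref{area1}) and (\ref{yaron3})). The genuine gap is the identity $c_{\max}(X\times P)=c_{\mathrm{HZ}}(X\times P)$, which is part of the statement and which you correctly observe does not follow from (\ref{cminmax}) --- but you then only sketch a program and explicitly leave it open. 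The heuristic that the shortest closed billiard characteristic ``singles out a symplectic direction'' into whose cylinder $X\times P$ can be squeezed is not a proof and will not become one by soft arguments: as you say yourself, the coincidence of the cylindrical capacity with $c_{\mathrm{HZ}}$ on convex bodies is a strong Viterbo-type statement, and nothing about the minimal orbit by itself produces a symplectic embedding. The paper does not prove this step either; it imports it from the revised version of \cite{arkaos13}, whose result (\ref{yaron3}) asserts $c_{\max}(X\times P)=c_{\mathrm{HZ}}(X\times P)=4\hbar\max\{\lambda:\lambda P^{\hbar}\subset X\}$ for \emph{every} pair of centrally symmetric convex bodies, proved by constructing explicit symplectic embeddings of such Lagrangian products into cylinders. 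Note you need this also for the equality clause of the theorem: at $P=X^{\hbar}$ your first step pins down $c_{\mathrm{HZ}}$ but not $c_{\max}$. So to complete the proof you must either invoke (\ref{yaron3}) or reproduce the embedding construction of \cite{arkaos13}; there is no route from the single value $c_{\mathrm{EHZ}}(K\times K^{o})=4$ alone.

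Separately, your closing claim in the second paragraph, that equality in (\ref{yaron1}) ``can hold only when $X^{\hbar}=P$,'' overshoots the theorem (which asserts equality only \emph{if} $X^{\hbar}=P$) and is false for $n\geq 2$. Monotonicity of capacities is not strict, and (\ref{yaron3}) shows the value is $4\hbar$ exactly when $\max\{\lambda:\lambda P^{\hbar}\subset X\}=1$, which is compatible with $X^{\hbar}\subsetneq P$: take $X=B^{n}(1)$, so that $X^{\hbar}=B^{n}(\hbar)$ by (\ref{ballpolar}), and let $P$ be the convex hull of $B^{n}(\hbar)\cup\{v,-v\}$ with $|v|>\hbar$. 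Then $P^{\hbar}=B^{n}(1)\cap\{x:|v^{T}x|\leq\hbar\}$ still contains unit vectors orthogonal to $v$, so no dilate $\lambda P^{\hbar}$ with $\lambda>1$ fits inside $X$, whence $c_{\max}(X\times P)=c_{\mathrm{HZ}}(X\times P)=4\hbar$ although $P\neq X^{\hbar}$. The ``only if'' direction does hold for $n=1$, where both bodies are intervals and every capacity reduces to area, but the clause should be deleted in general.
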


The proof of formula (\ref{yaron1}) follows by monotonicity from the fact that
we have
\begin{equation}
c_{\max}(X\times X^{\hbar})=c_{\mathrm{HZ}}(X\times X^{\hbar}%
)=4h.\label{yaron2}%
\end{equation}
The proof of the latter is highly non-trivial and is based on a careful study
of certain Minkowski billiard trajectories and requires the topological
machinery developed in Artstein-Avidan et al. in
\cite{armios08,aros12,arkaos13}. We note that the equality $c_{\mathrm{HZ}%
}(X\times X^{\hbar})=4h$ was proven in an earlier version of \cite{arkaos13};
in a revised version it is shown that for any pair $(X,P)$ of centrally convex
bodies (polar or not) one has the equality%
\begin{equation}
c_{\max}(X\times P)=c_{\mathrm{HZ}}(X\times P)=4\hbar\max\{\lambda P^{\hbar
}\subset X\}.\label{yaron3}%
\end{equation}

One immediate consequence of property (\ref{yaron1}) is that when $(X,P)$ is a
quantum pair not only is the area of the projection of the product $X\times P$
on any of the conjugate planes $x_{j},p_{j}$ always at least $4\hbar$, but in
addition there is no way to deform $X\times P$ using symplectomorphisms to
make the area of such a projection decrease below the value $4\hbar$.

\section{RS\ Inequalities and $\hbar$-Polar Quantum Pairs}

We will need the following characterization \cite{karl} of the orthogonal
projections $X$ and $P$ of an ellipsoid $\Omega_{\Sigma}$: they are the
$n$-dimensional ellipses
\begin{equation}
X:\tfrac{1}{2}x^{T}A^{-1}x\leq1\text{ \ , \ }P:\tfrac{1}{2}p^{T}B^{-1}%
p\leq1\label{proj}%
\end{equation}
where
\begin{equation}
A=(I_{n},0_{n})\Sigma(I_{n},0_{n})^{T}\text{\ \ , \ }B=(0_{n},I_{n}%
)\Sigma(0_{n},I_{n})^{T}\label{proj1}%
\end{equation}
are symmetric positive definite $n\times n$ matrices.

\begin{theorem}
\label{thm2}Let $X$ and $P$ be the orthogonal projections of the quantum
covariance ellipsoid $\Omega_{\Sigma}$ on the spaces $\mathbb{R}_{x}^{n}$ and
$\mathbb{R}_{p}^{n}$, respectively. Then $(X,P)$ is a $\hbar$-polar quantum
pair, and hence $c_{\mathrm{HZ}}(X\times P)\geq4\hbar$.
\end{theorem}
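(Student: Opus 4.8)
The plan is to reduce the set-theoretic inclusion $X^{\hbar}\subset P$ to a single Loewner matrix inequality between the diagonal blocks of $\Sigma$, and then to extract that inequality from the positivity condition (\ref{sigpos}) by a Schur-complement argument. Write $\Sigma$ in block form as $\Sigma=\begin{pmatrix}A & C\\ C^{T} & B\end{pmatrix}$, so that $A=\Delta(x,x)$ and $B=\Delta(p,p)$ are exactly the matrices appearing in (\ref{proj1}) and $C=\Delta(x,p)$ is the (not necessarily symmetric) cross-covariance block. Recall from (\ref{sigpos}) that $\Sigma+\tfrac{i\hbar}{2}J\geq0$, and that $\Sigma$ is positive definite, so $A$ and $B$ are invertible.

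First I would translate the inclusion into matrix language. Applying the ellipsoid polar formula (\ref{da}) to $X:\tfrac{1}{2}x^{T}A^{-1}x\leq1$ gives $X^{\hbar}:p^{T}Ap\leq\tfrac{\hbar^{2}}{2}$, i.e. $\tfrac{2}{\hbar^{2}}p^{T}Ap\leq1$, while $P:\tfrac{1}{2}p^{T}B^{-1}p\leq1$. Since for positive-definite $Q_{1},Q_{2}$ one has $\{p:p^{T}Q_{1}p\leq1\}\subset\{p:p^{T}Q_{2}p\leq1\}$ if and only if $Q_{2}\leq Q_{1}$ in the Loewner order, the inclusion $X^{\hbar}\subset P$ is equivalent to $\tfrac{1}{2}B^{-1}\leq\tfrac{2}{\hbar^{2}}A$, that is, to the single inequality
\[
A\geq\tfrac{\hbar^{2}}{4}B^{-1}.
\]

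The heart of the argument is to deduce this from (\ref{sigpos}). Since $B$ is positive definite, I would form the Schur complement of the block $B$ in the Hermitian matrix $\Sigma+\tfrac{i\hbar}{2}J=\begin{pmatrix}A & C+\tfrac{i\hbar}{2}I\\ C^{T}-\tfrac{i\hbar}{2}I & B\end{pmatrix}$; positivity of (\ref{sigpos}) is then equivalent to
\[
A-\bigl(C+\tfrac{i\hbar}{2}I\bigr)B^{-1}\bigl(C^{T}-\tfrac{i\hbar}{2}I\bigr)\geq0 .
\]
Expanding, this expression is the sum of a real symmetric part $A-CB^{-1}C^{T}-\tfrac{\hbar^{2}}{4}B^{-1}$ and an imaginary part $\tfrac{i\hbar}{2}(CB^{-1}-B^{-1}C^{T})$, where $CB^{-1}-B^{-1}C^{T}$ is a real skew-symmetric matrix. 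Testing the positive-semidefinite Hermitian form against real vectors annihilates the skew-symmetric contribution, so the real part is itself positive semidefinite: $A-CB^{-1}C^{T}-\tfrac{\hbar^{2}}{4}B^{-1}\geq0$. Discarding the nonnegative term $CB^{-1}C^{T}\geq0$ yields exactly $A\geq\tfrac{\hbar^{2}}{4}B^{-1}$, hence $X^{\hbar}\subset P$ and $(X,P)$ is a $\hbar$-polar quantum pair. The capacity estimate $c_{\mathrm{HZ}}(X\times P)\geq4\hbar$ is then immediate from Theorem \ref{thm1}, whose bound (\ref{yaron1}) applies to any $\hbar$-polar quantum pair.

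I do not anticipate a serious obstacle, but the crucial and most delicate step is the extraction of the matrix inequality from (\ref{sigpos}). The two points requiring care are that $C$ is a general, non-symmetric matrix, so the cross terms in the Schur complement must be tracked honestly until the skew-symmetric part is shown to drop out under real test vectors, and that one should take the Schur complement with respect to $B$ rather than $A$: the $A$-complement would instead produce $B\geq\tfrac{\hbar^{2}}{4}A^{-1}$, which is equivalent (both amounting to all eigenvalues of $AB$ being $\geq\hbar^{2}/4$) but forces an extra conjugation step, whereas the $B$-complement lands directly on the form needed for $X^{\hbar}\subset P$.
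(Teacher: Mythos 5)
Your argument is correct, but it takes a genuinely different route from the paper's. Both proofs begin identically: using (\ref{proj1}) to identify $A=\Delta(x,x)$, $B=\Delta(p,p)$, and the ellipsoid-polarity formula (\ref{da}) to convert the inclusion $X^{\hbar}\subset P$ into the single Loewner inequality $A\geq\tfrac{\hbar^{2}}{4}B^{-1}$, equivalently the statement that all eigenvalues of $AB$ are at least $\hbar^{2}/4$. The divergence is in how that inequality is extracted from the quantum condition (\ref{sigpos}). The paper conjugates $\Sigma$ by the symplectic matrix $M_{L}=\begin{pmatrix}L^{T} & 0\\ 0 & L^{-1}\end{pmatrix}$ supplied by the block-diagonal Williamson-type result (\ref{lalb}) of \cite{golu09}, uses the covariance property (\ref{phi2}) of polar duality under $(X,P)\mapsto((L^{T})^{-1}X,LP)$ to reduce to the case where both blocks equal $\Lambda=\operatorname{diag}(\sqrt{\lambda_{1}},\dots,\sqrt{\lambda_{n}})$, and then reads off the inclusion as the $n$ scalar Heisenberg inequalities $\lambda_{j}\geq\hbar^{2}/4$. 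You instead take the Schur complement of the $B$-block in the Hermitian matrix $\Sigma+\tfrac{i\hbar}{2}J\geq0$ and observe that testing against real vectors annihilates the skew contribution $\tfrac{i\hbar}{2}(CB^{-1}-B^{-1}C^{T})$, leaving $A-CB^{-1}C^{T}-\tfrac{\hbar^{2}}{4}B^{-1}\geq0$, from which the target follows by discarding $CB^{-1}C^{T}\geq0$; every step here checks out (the Schur criterion is valid since $B>0$, and positive semidefiniteness of a real symmetric matrix is indeed detected by real test vectors). Your route is more elementary and self-contained: it bypasses the diagonalization theorem cited from \cite{golu09} and, notably, proves along the way the strictly sharper matrix inequality $A-CB^{-1}C^{T}\geq\tfrac{\hbar^{2}}{4}B^{-1}$, a matrix form of the RS inequalities retaining the cross-covariance block, which the paper's reduction never exhibits. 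What the paper's approach buys in exchange is the explicit symplectic normal form and the interpretation of the inclusion as a family of one-dimensional Heisenberg inequalities for the transformed state, in keeping with the symplectic-invariance theme of the article. Your concluding appeal to Theorem \ref{thm1} for $c_{\mathrm{HZ}}(X\times P)\geq4\hbar$ is exactly how the paper finishes as well.
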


\begin{proof}
Using the explicit form (\ref{covell}) of $\Sigma$ Eqns. (\ref{proj1}) imply
that $A=\Delta(x,x)$ and $B=\Delta(p,p)$; applying Eqn. (\ref{da}) with
$R=\sqrt{2}$ and replacing $A$ with its inverse conditions (\ref{proj}) are
thus equivalent to%
\begin{equation}
X^{\hbar}:x^{T}\Delta(x,x)x\leq\tfrac{1}{2}\hbar^{2}\text{ \ , \ }P:\tfrac
{1}{2}p^{T}\Delta(p,p)^{-1}p\leq1. \label{projxp}%
\end{equation}
In \cite{golu09} we have proven that we can find an arbitrary invertible
$n\times n$ matrix $L$ can be such that
\begin{equation}
L^{T}\Delta(x,x)L=L^{-1}\Delta(p,p)(L^{T})^{-1}=\Lambda\label{lalb}%
\end{equation}
where $\Lambda=\operatorname*{diag}(\sqrt{\lambda_{1}},...,\sqrt{\lambda_{n}%
})$, the positive numbers $\lambda_{1},...,\lambda_{n}$ being the eigenvalues
of the product $\Delta(x,x)\Delta(p,p)$ (this is a block-diagonal version of
Williamson's \cite{wi36} symplectic diagonalization theorem). The matrix
$M_{L}=%
\begin{pmatrix}
L^{T} & 0\\
0 & L^{-1}%
\end{pmatrix}
$ is symplectic (because $M_{L}JM_{L}^{T}=J$) hence the condition
$c(\Omega_{\Sigma})\geq\frac{1}{2}h$ is equivalent to $c(\Omega_{\Sigma_{L}%
})\geq\frac{1}{2}h$ where $\Sigma_{L}=M_{L}\Sigma M_{L}^{T}$ is given by
\[
\Sigma_{L}=%
\begin{pmatrix}
\Lambda & L^{T}\Delta(x,p)(L^{T})^{-1}\\
L^{-1}\Delta(p,x)L & \Lambda
\end{pmatrix}
.
\]
Replacing $\Sigma$ with $\Sigma_{L}$ has the effect of replacing $(X,P)$ with
$(X_{L},P_{L})=((L^{T})^{-1}X,LP)$. In view of the property (\ref{phi2}) the
inclusion $X^{\hbar}\subset P$ is equivalent to $((L^{T})^{-1}X)^{\hbar
}\subset LP$. It is equivalent to prove the Theorem when $\Sigma$ is replaced
with $\Sigma_{L}$ provided that we replace simultaneously$(X,P)$ with
$(X_{L},P_{L})$ in which case Eqn. (\ref{lalb}) becomes%
\[
X_{L}^{\hbar}:x^{T}\Lambda x\leq\tfrac{1}{2}\hbar^{2}\text{ \ , \ }%
P_{L}:\tfrac{1}{2}p^{T}\Lambda^{-1}p\leq1.
\]
The inclusion $X_{L}^{\hbar}\subset P_{L}$ is equivalent to $\lambda_{j}%
\geq\hslash^{2}/4$ for $j=1,...,n$ which are the Heisenberg inequalities since
the diagonal elements of $\Lambda$ are $(\Delta x_{1})^{2}=(\Delta p_{1}%
)^{2},...,(\Delta x_{n})^{2}=(\Delta p_{n})^{2}$.
\end{proof}

\section{Hardy's Uncertainty Principle}

The discussion of Hardy's uncertainty principle in Section II can be extended
to an arbitrary number of degrees of freedom. In \cite{golu09} we have proven
the following generalization of Hardy's principle: assume that $\psi$ is
square integrable on $\mathbb{R}^{n}$ and that
\begin{equation}
|\psi(x)|\leq Ce^{-\tfrac{1}{4}x^{T}A^{-1}x}\text{ , }|\widehat{\psi}(p)|\leq
Ce^{-\tfrac{1}{4}p^{T}B^{-1}p} \label{AB}%
\end{equation}
where $A$ and $B$ are two real positive definite symmetric matrices. We then
have $\lambda_{j}\geq\hbar^{2}/4$ where the $\lambda_{j}$, $j=1,...,n$, are
the eigenvalues of $AB$. It easily follows by an argument similar to that in
the proof of Theorem \ref{thm2} that the ellipsoids%
\[
X:\tfrac{1}{2}x^{T}A^{-1}x\leq1\text{ \ , \ }P:\tfrac{1}{2}p^{T}B^{-1}p\leq1
\]
form a $\hbar$-polar dual pair. This relation between $\hbar$-polar duality
and Hardy's uncertainty principle will be fully developed and generalized to
arbitrary exponents in a forthcoming paper \cite{maya}. We conjecture that if
conditions (\ref{AB}) are replaced with%
\begin{equation}
|\psi(x)|\leq Ce^{-\tfrac{1}{2}||x||_{X}^{2}}\text{ \ \textit{and}
\ }|\widehat{\psi}(p)|\leq Ce^{-\tfrac{1}{2}||p||_{P}^{2}} \label{ABXP}%
\end{equation}
where $||x||_{X}$ and $||p||_{P}$ are the Minkowski norms associated with $X$
and $P$ then $(X,P)$ is a $\hbar$-polar quantum pair.

\begin{acknowledgement}
This work has been supported by a research grant from the Austrian Research
Agency FWF (Projektnummer P23902-N13). I would like to thank Yaron Ostrover
(Tel Aviv) for gratifying conversations about polar duality.
\end{acknowledgement}

\end{document}